\def\BibTeX{{\rm B\kern-.05em{\sc i\kern-.025em b}\kern-.08em
    T\kern-.1667em\lower.7ex\hbox{E}\kern-.125emX}}
\newtheorem{theorem}{Theorem}
\newtheorem{corollary}{Corollary}
\newtheorem{lemma}{Lemma}
\newtheorem{proposition}{Proposition}
\newtheorem{definition}{Definition}
\newcommand{\ie}{i.e.}
\newcommand{\eg}{e.g.}
\newcommand{\reals}{\mathbb{R}}
\newcommand{\SO}[1]{SO(#1)}
\newcommand{\SE}[2]{SE_{#2}(#1)}
\newcommand{\liealg}[1]{\mathfrak{#1}}
\newcommand{\Lmap}[1]{\mathcal{L}_{\liealg{g}}(#1)}
\newcommand{\bfzero}{\mathbf{0}}
\newcommand{\obset}[1]{\mathcal{S}_{\{#1\}}}
\DeclareMathOperator{\im}{Im}
\DeclareMathOperator*{\argmin}{argmin}
\newcommand{\bfa}{\mathbf{a}}
\newcommand{\bfA}{\mathbf{A}}
\newcommand{\bfb}{\mathbf{b}}
\newcommand{\bfB}{\mathbf{B}}
\newcommand{\bfd}{\mathbf{d}}
\newcommand{\bbfd}{\bar{\bfd}}
\newcommand{\bfF}{\mathbf{F}}
\newcommand{\bff}{\mathbf{f}}
\newcommand{\bfG}{\mathbf{G}}
\newcommand{\bfg}{\mathbf{g}}
\newcommand{\bfh}{\mathbf{h}}
\newcommand{\bfH}{\mathbf{H}}
\newcommand{\bbfH}{\bar{\mathbf{H}}}
\newcommand{\bfI}{\mathbf{I}}
\newcommand{\calJ}{\mathbf{\mathcal{J}}}
\newcommand{\bfK}{\mathbf{K}}
\newcommand{\bfL}{\mathbf{L}}
\newcommand{\bfN}{\mathbf{N}}
\newcommand{\hbfN}{\hat{\bfN}}
\newcommand{\bfn}{\mathbf{n}}
\newcommand{\bfp}{\mathbf{p}}
\newcommand{\bfP}{\mathbf{P}}
\newcommand{\bfQ}{\mathbf{Q}}
\newcommand{\bfR}{\mathbf{R}}
\newcommand{\bfS}{\mathbf{S}}
\newcommand{\bfu}{\mathbf{u}}
\newcommand{\bfv}{\mathbf{v}}
\newcommand{\bfw}{\mathbf{w}}
\newcommand{\bfx}{\mathbf{x}}
\newcommand{\bbfx}{\bar{\bfx}}
\newcommand{\bfy}{\mathbf{y}}
\newcommand{\bbfy}{\bar{\bfy}}
\newcommand{\bfz}{\mathbf{z}}
\newcommand{\bochi}{\protect\raisebox{1pt}{$\boldsymbol{\chi}$}}
\newcommand{\hbochi}{\hat{\bochi}}
\newcommand{\boeta}{\boldsymbol{\eta}}
\newcommand{\boomega}{\boldsymbol{\omega}}
\newcommand{\boxi}{\boldsymbol{\xi}}
\newcommand{\bodelta}{\boldsymbol{\delta}}
\newcommand{\bozeta}{\boldsymbol{\zeta}}
\newcommand{\EKF}{\textcolor{black}{EKF}}
\newcommand{\IEKF}{\textcolor{black}{IEKF}}
\newcommand{\IIEKF}{\textcolor{black}{IterIEKF}}
\newcommand{\IterEKF}{\textcolor{black}{IterEKF}}
\newcommand{\LGIEKF}{\textcolor{black}{LG-IterEKF}}
\newcommand{\LGEKF}{\textcolor{black}{LG-EKF}}
\begin{document}

\title{Iterated Invariant Extended Kalman Filter (\IIEKF{})}
\author{% <-this % stops a space
    Sven Goffin, 
    Axel Barrau, 
    Silvère Bonnabel, 
    Olivier Brüls, 
    and 
    Pierre Sacré% <-this % stops a space
    \thanks{S.\ Goffin is a FRIA grantee of the Fonds de la Recherche Scientifique - FNRS.}% <-this % stops a space
    \thanks{S.\ Goffin and P.\ Sacré are with the Department of Electrical Engineering and Computer Science, University of Liège, Belgium (sven.goffin@uliege.be; p.sacre@uliege.be).}% <-this % stops a space
    \thanks{A.\ Barrau is with OFFROAD, France (axel@offroad.works).}% <-this % stops a space
    \thanks{S.\ Bonnabel is with MINES Paris, PSL Research University, Centre for Robotics, France (silvere.bonnabel@mines-paristech.fr).}% <-this % stops a space
    \thanks{O.\ Brüls is with the Department of Aerospace and Mechanical Engineering, University of Liège, Belgium (o.bruls@uliege.be).} % <-this % stops a space
}

\maketitle

%%%%%%%%%%%%%%%%%%%%%%%%%%%%%%%%%%%%%%%%%%%%%%%%%%%%%%%%%%%%%%%%%%%%%%%%%%%%%%%%

\begin{abstract}
We study the mathematical properties of the Invariant Extended Kalman Filter (\IEKF{}) when iterating on the measurement update step, following the principles of the well-known Iterated Extended Kalman Filter. This iterative variant of the \IEKF{} (\IIEKF{}) systematically improves its accuracy through Gauss-Newton-based relinearization, and exhibits additional theoretical properties, particularly in the low-noise regime, that resemble those of the linear Kalman filter.  We apply the proposed approach to the problem of estimating the extended pose of a crane payload using an  inertial measurement unit. Our results suggest that the \IIEKF{} significantly outperforms the \IEKF{} when measurements are highly accurate.
\end{abstract}

\begin{IEEEkeywords}
   Nonlinear state estimation, Invariant Kalman filtering, Lie groups, Nonlinear systems,   Gauss-Newton method
\end{IEEEkeywords}

%%%%%%%%%%%%%%%%%%%%%%%%%%%%%%%%%%%%%%%%%%%%%%%%%%%%%%%%%%%%%%%%%%%%%%%%%%%%%%%%
\section{Introduction}
\label{sec:intro}

In the field of state estimation and observer design, the Extended Kalman Filter (\EKF)  is one of the most widespread methods used in practice. 
However,   it is based on linearization, and the associated errors  were early recognized as possibly degrading  performance~\cite{schmidt1981kalman}. This motivated the development of iterative filtering algorithms such as the iterated \EKF{} (\IterEKF{})~\cite{bell1993iterated, zhao2016robust, bloesch2017iterated}, iterated \EKF{} on Lie groups (\LGIEKF{})~\cite{bourmaud2016intrinsic, liu2023general,chahbazian2021laplace}, and smoothers~\cite{van2020invariant}, where the discrepancy between the nonlinear output function and its first-order approximation is reduced, by refining the operating point.  

On another note, observer design has benefited from geometric approaches over the past two decades. They may be traced back to attitude and pose estimation~\cite{bonnabel2005invariant, mahony2008nonlinear, cohen2020navigation}. When turning to the more general problem of inertial navigation and localization, the Invariant Extended Kalman Filter (\IEKF)~\cite{barrau2016invariant,barrau2018invariant} has become a key alternative to the standard \EKF{}. Its theoretical properties include  convergence guarantees~\cite{barrau2016invariant},  consistency properties in the presence of unobservability~\cite{barrau2018invariant}, see also related works~\cite{mahony2017geometric,  van2019geometric, mahony2021homogeneous}, and have led to applications in various fields, \eg, ~\cite{barrau2018invariant, van2020invariant, hartley2020contact, pavlasek2021invariant, wu2017invariant,   heo2018consistent, mahony2017geometric} and in the industry~\cite{barrau2018invariant}.  The field of observers has benefited from the introduction of   new groups brought by the \IEKF{} theory,  see \eg,~\cite{wang2020hybrid, hashim2021gps,van2021autonomous}, namely the groups $SE_2(3)$ and $SE_k(d)$ introduced in~\cite{barrau2015non,barrau2016invariant} and their relation to the navigation equations, see~\cite{barrau2022geometry} for a recent perspective. 
The equivariant observer framework, see~\cite{mahony2021equivariant}, is closely related.

% task: what was undertakin to address the need
In this technical note, we introduce the Iterated Invariant Extended Kalman Filter (\IIEKF{}), a refinement of the~\IEKF~\cite{barrau2018invariant} that leverages the Gauss-Newton (GN) method  to enhance its measurement update step, and which systematically improves its accuracy.   Our main contribution is then a comprehensive analysis of the \IIEKF{} properties, revealing rare properties in nonlinear estimation that remind of the linear case,  and offering new tools and  insights into the theory of invariant filtering.   Finally, to highlight the practical relevance of our approach, we apply the \IIEKF{} to a problem of engineering interest, and show  it outperforms the \IEKF{}  when observations are very accurate (low measurement noise).

 While the ultimate goal of invariant filtering is to recover properties of the linear Kalman filter in a nonlinear setting, and to identify    systems of interest that lend themselves to this goal, a parallel line of work has been concerned with  how to accommodate in general the geometric nature of the state space, without seeking further properties. In particular, the "Lie-Group EKF"  (\LGEKF{}) from \cite{bourmaud2013discrete} proposes a generic intrinsic version of the EKF on  Lie groups, that accounts for geometric features such as curvature. An iterated version, called  \LGIEKF{} was proposed in \cite{bourmaud2016intrinsic}.  See, \eg, 
 \cite{he2021kalman} and  applications to inertial-lidar navigation \cite{xu2022fast} for recent references in this field.  When tailored to the specific class of problems we consider, namely by using the groups from invariant filtering and the choice of errors that it advocates, the \LGIEKF{} (that we call then \emph{adapted})  is close to the proposed \IIEKF{}, although slightly different. As a result, we can prove  the properties  of the \IIEKF{}    carry over to this adapted \LGIEKF{}. This novel result constitutes a secondary contribution.  

% The \IIEKF{} ability to recover several provable properties and guarantees of the linear case, although in a nonlinear context, is made possible by considering specific state spaces (Lie groups), and specific output maps, namely the invariant measurements of~\cite{barrau2016invariant}. To establish these mathematical properties, we consider the limit case where the measurement noise approaches zero. This scenario is particularly meaningful, as the improvements offered by iterative methods are more pronounced when measurements are accurate, since refining the linearization point then yields significant benefits.

% object: what the present document does or covers
Section~\ref{sec:iterated_IEKF}  introduces the proposed \IIEKF{} algorithm for both noisy and noise-free measurements. 
Section~\ref{sec:properties} derives   theoretical properties of the \IIEKF{} in the context of noise-free measurements. We prove they carry over to the \LGIEKF{} from \cite{bourmaud2016intrinsic} under some conditions.  
Section~\ref{sec:crane} evaluates in simulations the performance of the proposed algorithm compared to other state-of-the-art filters for estimating the extended pose (orientation, velocity, and position) of a crane hook, equipped with an Inertial Measurement Unit (IMU). 

In the following, the acronym \IEKF{} is reserved for the invariant \EKF{}, whereas \IterEKF{} is used for the iterated \EKF{}.

%%%%%%%%%%%%%%%%%%%%%%%%%%%%%%%%%%%%%%%%%%%%%%%%%%%%%%%%%%%%%%%%%%%%%%%%%%%%%%%%
\section{The Iterated Invariant \EKF{} (\IIEKF{})}\label{sec:iterated_IEKF}
In this section, we first recall the equations of the \IEKF{} and then introduce the \IIEKF{}, its iterated version.

\subsection{The IEKF equations}
The invariant framework assumes the state $\bochi_k$ is an element of a matrix Lie group $G \subset GL_N(\reals)$ having dimension~$n$, where $GL_N(\reals)$ denotes the group of $N\times N$ invertible matrices. An example of such a state $\bochi$ is given below, in~\eqref{chi:def:eq}. The exponential map of~$G$ is defined as $\exp_G(\cdot) := \exp_m(\Lmap{\cdot})$, where $\exp_m(\cdot)$ is the matrix exponential and $\Lmap{\cdot}$ is the bijective linear map identifying the Lie algebra $\liealg{g}$ with $\reals^n$, see \eg, \cite{barrau2016invariant}. The \IEKF{} comes in two versions, the left- or the right-invariant version, depending on the form of the observations, see~\cite{barrau2018invariant}. This note focuses on the left-\IEKF{}. Transposition to the right-\IEKF{} is straightforward.  

Consider the following nonlinear system in discrete time:
\begin{subequations} \label{inv:system}
    \begin{align}
        \bochi_{k+1} &= \bff(\bochi_k, \bfu_k, \bfw_k),\label{inv:dyna}\\
        \bfy_{k} &= \bochi_k\bfd_k + \bfn_{k},\label{inv:measurements}
    \end{align}
\end{subequations}
with $\bfu_k \in \reals^m$  a control input,   
$\bff:G\times \reals^m \times \reals^b \to G$ the function describing the system dynamics, 
$\bfw_k \sim \mathcal{N}(\bfzero, \bfQ_k)$ an unknown process noise with $\bfQ_k\in \reals^{b\times b}$, $\bfy_{k} \in \reals^N $   the observation that consists of partial and noisy measurements of the state,
$\bfd_k \in \mathbb{R}^N$   a known vector, and 
$\bfn_k \sim \mathcal{N}(\bfzero, \bfN_k)$  an unknown measurement noise with $\bfN_k \in \reals^{N \times N}$.

The (left) \IEKF{} assumes that the state follows a concentrated Gaussian distribution on $G$~\cite{chirikjian2009stochastic, wang2006error, wolfe2011bayesian, bourmaud2015continuous}. At time $k$, letting~$l$ denote current time index $k$, or previous index $k-1$,  
\begin{equation} \label{eq:concentrated_gaussian} 
    \bochi_k = \hbochi_{k\mid l} \exp_G(\boxi_{k\mid l}), \quad \text{where } \boxi_{k\mid l} \sim \mathcal{N}(\bfzero, \bfP_{k\mid l}), 
\end{equation} 
where $\hbochi_{k\mid l} \in G$ is the (best) estimate of the state, and the linearized error $\boxi_{k\mid l} \in \mathbb{R}^n$ is a centered Gaussian   with   covariance matrix $\bfP_{k\mid l} \in \reals^{n \times n}$. 
Using this model, the \IEKF{} linearizes the equations of the system~\eqref{inv:system} at the current estimate, leading to the following update and propagation:
\begin{equation}\label{eq:IEKF_equations}
    \begin{tabular}{ll}
        Upd. & $\left\{\begin{aligned}
            \bfz_k &= \hbochi_{k\mid k-1}^{-1} \bfy_k - \bfd_k,\\
            \hbfN_k &= \hbochi_{k\mid k-1}^{-1} \bfN_k (\hbochi_{k\mid k-1}^{-1})^T,\\
            \bfS_k &= \bfH_{k} \bfP_{k\mid k-1} \bfH_{k}^T + \hbfN_k,\\
            \bfK_k &= \bfP_{k\mid k-1} \bfH_{k}^T \bfS_k^{-1},\\
            \hbochi_{k\mid k} &= \hbochi_{k\mid k-1} \exp_G(\bfK_k \bfz_k),\\
            \bfP_{k\mid k} &= (\bfI - \bfK_k \bfH_{k}) \bfP_{k\mid k-1},
        \end{aligned}\right.$\\
        Prop. & $\left\{\begin{aligned}
            \hbochi_{k+1\mid k} &= \bff(\hbochi_{k\mid k}, \bfu_k, \bfzero),\\
            \bfP_{k+1\mid k} &= \bfF_{k} \bfP_{k\mid k} \bfF_{k}^T + \bfG_{k} \bfQ_k \bfG_{k}^T,
        \end{aligned}\right.$
    \end{tabular}
\end{equation}
 with $\bfz_k$  the innovation (\ie,   prediction error), $\bfS_k$  the innovation covariance, and $\bfK_k$   the Kalman gain. As is customary in extended Kalman filtering, Jacobian matrices $\bfF_k,\bfG_k$ arise from first-order linearizations. However, in the case of invariant filtering, they are defined with respect to model \eqref{eq:concentrated_gaussian}, \ie,  
\begin{equation}\label{eq:F_k}
    \begin{aligned}
        &\bff(\bochi\exp_G(\boxi), \bfu, \bfw) =  \bff(\bochi, \bfu, \bfzero)\cdot\\
        &\qquad\quad\exp_G\left(\bfF\boxi +\bfG \bfw+ \mathcal{O}(\|\boxi\|^2,\|\bfw\|^2,\|\boxi\| \|\bfw\|)\right).     
    \end{aligned}
\end{equation}
When the function $\bff$ possesses the group affine property (\ie, is of the form $\bff(\bochi , \bfu, \bfw)=\bar\bff(\bochi , \bfu)\mathbf{g}( \bfw)$ where $\bar\bff$ satisfies $\bar\bff(\bfa\bfb,\bfu)=\bar\bff(\bfa, \bfu)\bar\bff(\bfI,\bfu)^{-1}\bar\bff(\bfb,\bfu)$, for all $\bfa,\bfb \in G$, $\bfu \in \reals^m$, see~\cite{barrau2018invariant}), the Jacobian $\bfF_k$ becomes independent of the current state estimate $\hbochi_{k\mid k}$ \cite{barrau2016invariant,barrau2018invariant}. Moreover, in the absence of process noise, this results in the exact error dynamics $\boxi_{k+1\mid k} = \bfF_k \boxi_{k\mid k}$, where the Jacobian $\bfF_k$, i.e., the first order, fully captures the nonlinearity of $\bff$. This property called log-linearity is a key feature of invariant filtering. 
Note that an \IEKF{} can   be devised even if $\bff$ is not group affine, based on     expansion \eqref{eq:F_k}, and the theory to follow applies.
 
 Similarly, we define $\bfH_k$ as the Jacobian of the innovation $\bfz_k$ w.r.t.\ $\boxi_{k\mid k-1}$. To compute it, we first express $\bfy_k$ as $\hbochi_{k\mid k-1} \exp_G(\boxi_{k\mid k-1})\bfd_k + \bfn_k$. We define the innovation   $\bfz_k = \hbochi_{k\mid k-1}^{-1} \bfy_k - \bfd_k=\exp_G(\boxi_{k\mid k-1})\bfd_k - \bfd_k +  \hbochi_{k\mid k-1}^{-1} \bfn_k$, and perform a   first-order expansion w.r.t.\  $\boxi_{k\mid k-1}$. The corresponding Jacobian $ \bfH_k$ is then defined through
\begin{equation}\label{eq:H_k}
    \exp_G(\boxi) \bfd_k = \bfd_k + \bfH_k \boxi + \mathcal{O}(\|\boxi\|^2),
\end{equation} 
whose detailed expression is to be found at Equation \eqref{eq:jac_formula}. 
Due to the specific form of the output and the definition of the innovation, the Jacobian $\bfH_k$ depends only on $\bfd_k$ and is thus independent of $\hbochi_{k\mid k-1}$, a feature of invariant filtering.

\subsection{The \IIEKF{} equations} 
When receiving a noisy measurement in the left-invariant form~\eqref{inv:measurements}, the \IIEKF{} aims to find the maximum \textit{a posteriori} (MAP) estimate, described in the following result.
\begin{proposition}
    Starting from the prior~\eqref{eq:concentrated_gaussian}, with $l=k-1$, supposed to encode the state distribution at time $k$ conditional on past information  $\bfy_0,\ldots,\bfy_{k-1}$, the MAP estimate in the light of latest measurement $\bfy_k$ is given by $\hbochi_{k\mid k}^\star = \hbochi_{k\mid k-1}\exp_G(\boxi^\star)$, where $\boxi^\star$ solves the optimization problem
    \begin{equation} \label{eq:IEKF_max_posterior}
        \boxi^\star = \argmin\limits_{\boxi} \frac{1}{2}\|\boxi\|^2_{\bfP_{k\mid k-1}} + \frac{1}{2}\|\bfz_{k} - \exp_G(\boxi)\bfd_k + \bfd_k\|^2_{\hbfN_k},
    \end{equation}
    %\rule{0pt}{0pt}
    where $\bfz_k$ is the innovation from~\eqref{eq:IEKF_equations},  $\|\boeta\|^2_{\Xi}:=\boeta^T\Xi^{-1}\boeta,$ and where matrices $\bfP_{k\mid k-1}$ and $\hbfN_k$ are assumed to be invertible.
\end{proposition}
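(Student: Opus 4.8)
The plan is to write down the posterior via Bayes' rule and show that, once the measurement residual is expressed in the left-translated frame of the current estimate, the negative log posterior coincides termwise with the objective in \eqref{eq:IEKF_max_posterior}. Since the prior \eqref{eq:concentrated_gaussian} is specified \emph{directly} through the linearized error $\boxi_k \sim \calN(\bfzero,\bfP_k)$, I would work throughout in the exponential coordinates $\boxi$: the state is parametrized as $\bochi = \hbochi_k\exp_G(\boxi)$, and maximizing the posterior over the state is, by construction, maximizing the posterior density of $\boxi$. This fixes the meaning of ``MAP'' as the mode in the chosen chart, consistent with the concentrated-Gaussian convention.

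First I would apply Bayes' rule $\tilde p(\bochi\given\bfy_k) \propto p(\bfy_k\given\bochi)\,\tilde p(\bochi)$ and take negative logarithms. The prior contributes $\tfrac{1}{2}\|\boxi\|^2_{\bfP_k}$ up to an additive constant, directly from \eqref{eq:concentrated_gaussian}. The likelihood, from $\bfy_k = \bochi\bfd_k + \bfn_k$ with $\bfn_k\sim\calN(\bfzero,\bfN_k)$, contributes $\tfrac{1}{2}\|\bfy_k-\bochi\bfd_k\|^2_{\bfN_k}$. The objective to minimize is therefore $\tfrac{1}{2}\|\boxi\|^2_{\bfP_k} + \tfrac{1}{2}\|\bfy_k-\bochi\bfd_k\|^2_{\bfN_k}$, and the only remaining task is to rewrite the second term.

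The crux is a change of frame followed by a covariance congruence. Substituting $\bochi = \hbochi_k\exp_G(\boxi)$ and using the innovation definition $\bfz_k = \hbochi_k^{-1}\bfy_k - \bfd_k$ from \eqref{eq:IEKF_innov}, I get
\[
\hbochi_k^{-1}(\bfy_k - \bochi\bfd_k) = \bfz_k - \exp_G(\boxi)\bfd_k + \bfd_k .
\]
Next, the definition $\hbfN_k = \hbochi_k^{-1}\bfN_k(\hbochi_k^{-1})^T$ is equivalent to $\hbfN_k^{-1} = \hbochi_k^T\bfN_k^{-1}\hbochi_k$, which yields the congruence $\|\hbochi_k\bfv\|^2_{\bfN_k} = \bfv^T\hbochi_k^T\bfN_k^{-1}\hbochi_k\bfv = \|\bfv\|^2_{\hbfN_k}$ for every $\bfv\in\reals^N$. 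Applying this with $\bfv = \hbochi_k^{-1}(\bfy_k-\bochi\bfd_k)$ converts the measurement term into $\tfrac{1}{2}\|\bfz_k - \exp_G(\boxi)\bfd_k + \bfd_k\|^2_{\hbfN_k}$, which is precisely the second term of \eqref{eq:IEKF_max_posterior}. Minimizing the resulting objective over $\boxi$ gives $\boxi_k^\star$, hence $\hbochi_k^\star = \hbochi_k\exp_G(\boxi_k^\star)$.

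I do not expect an analytic obstacle: the derivation is a reparametrization plus a covariance congruence. The one point deserving care—and that I would flag explicitly—is the definition of the MAP on the group itself. Because $\exp_G$ is a nonlinear reparametrization, the mode of the state density with respect to a fixed reference (Haar) measure on $G$ differs from the mode in exponential coordinates by the Jacobian $|\det(\partial\bochi/\partial\boxi)|$. I would resolve this by adopting the standard concentrated-Gaussian convention, in which the prior is postulated directly on $\boxi$ and the MAP is understood as the mode in that chart, so that no Jacobian factor enters; this is the precise sense in which the statement holds.
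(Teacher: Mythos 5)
Your proposal is correct and follows essentially the same route as the paper: Bayes' rule applied to the posterior of $\boxi_k$, with the Gaussian prior contributing $\tfrac{1}{2}\|\boxi\|^2_{\bfP_k}$ and the likelihood rewritten in the left-translated frame, your explicit congruence $\|\hbochi_k\bfv\|^2_{\bfN_k}=\|\bfv\|^2_{\hbfN_k}$ being exactly the paper's observation that $\hbochi_k^{-1}\bfn_k\sim\calN(\bfzero,\hbfN_k)$ in the innovation \eqref{innov:moi}. Your additional remark on the chart-dependence of the MAP (no Jacobian factor because the concentrated Gaussian is postulated directly on $\boxi$) is a valid clarification the paper leaves implicit.
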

\begin{proof}
    From Bayes' rule $p(\boxi_{k\mid k-1}\mid\bfy_k) = p(\boxi_{k\mid k-1}\mid\bfz_k) \propto p(\boxi_{k\mid k-1})p(\bfz_k\mid\boxi_{k\mid k-1})$, 
    with all densities $p$ implicitly conditional on past information $\bfy_0,\ldots,\bfy_{k-1}$. Thus $p(\boxi_{k\mid k-1}\mid\bfy_k)\propto\exp(-\frac{1}{2}\|\boxi_{k\mid k-1}\|^2_{\bfP_{k\mid k-1}})\exp(-\frac{1}{2}\|\bfz_{k} - \exp_G(\boxi_{k\mid k-1})\bfd_k + \bfd_k\|^2_{\hbfN_k})$, maximized by $\boxi^\star$.
\end{proof}

The optimization problem~\eqref{eq:IEKF_max_posterior} does not admit a closed-form solution. The \IEKF{} addresses this by approximating $\exp_G(\boxi)$ using its first-order Taylor expansion around $\bfzero$. Inspired by the iterated \EKF{}~\cite{bell1993iterated}, a more accurate solution can be obtained by applying the GN algorithm to iteratively refine the estimate until convergence. This process results in a sequence of updates that closely resemble those of the \IEKF{}. The resulting Iterated \IEKF{} (\IIEKF{}) is outlined in Algorithm~\ref{alg:IIEKF}. Details are provided in Appendix~\ref{app:IIEKF}.

\begin{algorithm}
\caption{The iterated invariant extended Kalman filter (\IIEKF{})}
\begin{algorithmic}[1]
\STATE Choose the initial state $\hbochi_{0\mid 0}\in G$ and initial covariance $\bfP_{0\mid 0} = \mathrm{Cov}(\boxi_{0\mid 0}) \in \reals^{n\times n}$.
\LOOP
\STATE Define   Jacobians $\bfF_k$, $\bfG_k$,  $\bfH_k$ of \IEKF{} framework.
\STATE Define  noise covariances $\bfQ_k,\bfN_k$.
\STATE
\STATE\COMMENT{Update}
\STATE $\mathrlap{\bfz_k}\hphantom{\boxi_{k\mid k-1}^{i}} \gets \hbochi_{k\mid k-1}^{-1}\bfy_k - \bfd_k$
\STATE $\mathrlap{\hbfN_k}\hphantom{\boxi_{k\mid k-1}^{i}} \gets \hbochi_{k\mid k-1}^{-1}\bfN_k(\hbochi_{k\mid k-1}^{-1})^T$
\STATE $\mathrlap{\boxi_{k\mid k-1}^{i}}\hphantom{\boxi_{k\mid k-1}^{i}} \gets \bfzero$
\WHILE {$\boxi_{k\mid k-1}^{i}$ not converged}
\STATE $\mathrlap{\bfH_k^{i}}\hphantom{\boxi_{k\mid k-1}^{i}} \gets \exp_G(\boxi_{k\mid k-1}^{i})\bfH_k\calJ_r(\boxi_{k\mid k-1}^{i})$
\STATE $\mathrlap{\bfS_k^{i}}\hphantom{\boxi_{k\mid k-1}^{i}} \gets \bfH_k^{i}\bfP_{k\mid k-1}(\bfH_k^{i})^T + \hbfN_k$
\STATE $\mathrlap{\bfK_k^{i}}\hphantom{\boxi_{k\mid k-1}^{i}} \gets \bfP_{k\mid k-1}(\bfH_k^{i})^T(\bfS_k^i)^{-1}$
\STATE $\mathrlap{\bfz_k^{i}}\hphantom{\boxi_{k\mid k-1}^{i}} \gets \bfz_k - \exp_G(\boxi_{k\mid k-1}^{i})\bfd_k + \bfd_k + \bfH_k^{i}\boxi_{k\mid k-1}^{i}$
\STATE $\mathrlap{\boxi_{k\mid k-1}^{i}}\hphantom{\boxi_{k\mid k-1}^{i}} \gets \bfK_k^{i} \bfz_k^{i}$
\ENDWHILE
\STATE $\mathrlap{\hbochi_{k\mid k}}\hphantom{\bfP_{k\mid k}} \gets \hbochi_{k\mid k-1}\exp_G(\boxi_{k\mid k-1}^{i})$
\STATE $\mathrlap{\bfK_k}\hphantom{\bfP_{k\mid k}} \gets \bfP_{k\mid k-1}\bfH_k^{T}(\bfH_k\bfP_{k\mid k-1}\bfH_k^{T} + \hbfN_k)^{-1}$
\STATE $\mathrlap{\bfP_{k\mid k}}\hphantom{\bfP_{k\mid k}} \gets (\bfI - \bfK_k\bfH_k)\bfP_{k\mid k-1}$
\STATE
\STATE \COMMENT{Propagation}
\STATE $\mathrlap{\hbochi_{k+1\mid k}}\hphantom{\hbochi_{k+1\mid k}} \gets \bff(\hbochi_{k\mid k}, \bfu_k, \bfzero)$
\STATE $\mathrlap{\bfP_{k+1\mid k}}\hphantom{\hbochi_{k+1\mid k}} \gets \bfF_k\bfP_{k\mid k}\bfF_k^T + \bfG_k\bfQ_k\bfG_k^T$
\ENDLOOP
\end{algorithmic}
$\calJ_r(\boxi)$ denotes the right Jacobian~\cite{chirikjian2009stochastic} of $G$ at $\boxi$, see \eqref{eq:right_jac}.
\label{alg:IIEKF}
\end{algorithm}
Iterating proves particularly useful when measurement noise is low, as we will see in the sequel. However, in the limit case where the measurement noise magnitude approaches zero, the innovation covariance $\bfS_k$ may become rank-deficient \cite{goffin2023invariant}, in which case the Kalman gain is undefined. An adapted version of the \IIEKF{} for this specific scenario is detailed in Algorithm~\ref{alg:nf_IIEKF}. Details are provided in Appendix~\ref{app:nf_IIEKF}.

\begin{algorithm}
\caption{The \IIEKF{} with noise-free measurements}
\begin{algorithmic}
\STATE Same as Algorithm~\ref{alg:IIEKF} with the following exceptions.
\STATE $\,$~5: Find $\bfL_{k\mid k-1}$ such that $\bfP_{k\mid k-1} = \bfL_{k\mid k-1}\bfL_{k\mid k-1}^T$.
\STATE 13: $\mathrlap{\bfK_k^{i}}\hphantom{\bfK_k^{i}} \gets \bfL_{k\mid k-1}(\bfH_k^i\bfL_{k\mid k-1})^\dagger$
\STATE 18: $\mathrlap{\bfK_k}\hphantom{\bfK_k^{i}} \gets \bfL_{k\mid k-1}(\bfH_k\bfL_{k\mid k-1})^\dagger$
\end{algorithmic}
Here, $(\cdot)^\dagger$ denotes the Moore-Penrose pseudo-inverse.
\label{alg:nf_IIEKF}
\end{algorithm}

%%%%%%%%%%%%%%%%%%%%%%%%%%%%%%%%%%%%%%%%%%%%%%%%%%%%%%%%%%%%%%%%%%%%%%%%%%%%%%%%
\section{Theoretical properties of the \IIEKF}\label{sec:properties}
Deriving properties in the presence of measurement noise seems out of reach, as is often the case in nonlinear filtering (\eg, the log-linear error property of~\cite{barrau2016invariant} only holds in the absence of process noise). However, we show   that the \IIEKF{} possesses strong properties in the absence of measurement noise (whereas the state estimate and the propagation may be noisy). We therefore consider noise-free measurement 
\begin{equation}\label{eq:nf_meas}
    \bfy_k = \bochi_k \bfd_k.
\end{equation}
In this case, a measurement defines a subset of the state space.
\begin{definition}\label{def:observed_set}
    We call \emph{observed set} associated with the noise-free measurement $\bfy_k = \bochi_k \bfd_k$ the subset 
    \begin{equation}\label{observed_set}
        \obset{\bochi \bfd_k = \bfy_k} := \{\bochi \in G \mid \bochi \bfd_k = \bfy_k\}.
    \end{equation}
\end{definition}Definition~\ref{def:compatibility} establishes criteria for  local compatibility of a Gaussian filter on Lie groups with   measurement  \eqref{eq:nf_meas}. 
\begin{definition}\label{def:compatibility}
    The estimate $(\hbochi_{k\mid l}, \bfP_{k\mid l})$ is said to be \emph{locally compatible with noise-free measurement $\bfy_k = \bochi_k \bfd_k$} if 
    \begin{enumerate}
        \item $\hbochi_{k\mid l}\in \obset{\bochi \bfd_k= \bfy_k}$,
        \item $\bfH_{k} \bfP_{k\mid l} \bfH_{k}^{T}=\bfzero$,
    \end{enumerate}
    where $\bfH_{k}$ denotes the Jacobian defined in~\eqref{eq:H_k}.
\end{definition}
Since these criteria are not immediately obvious, we briefly motivate their formulation. Using model \eqref{eq:concentrated_gaussian} and linearizing around $\boxi_{k\mid l}=\bfzero$ via \eqref{eq:H_k}, we obtain $\bfy_k = \bochi_k \bfd_k = \hbochi_{k\mid l}\exp_G(\boxi_{k\mid l})\bfd_k \approx \hbochi_{k\mid l} \bfd_k + \hbochi_{k\mid l}\bfH_k \boxi_{k\mid l}  $. As $\boxi_{k\mid l} \sim \mathcal{N}(\bfzero, \bfP_{k\mid l})$, for this  linear approximation of \eqref{eq:nf_meas} to hold (almost surely) we must have both $\hbochi_{k\mid l} \bfd_k = \bfy_k$ (letting $\boxi_{k\mid l}=0$), and  $\bfH_k \boxi_{k\mid l} = \bfzero$ whenever $\boxi_{k\mid l}$ in $\im \bfP_{k\mid l}$, that is, 1) and 2). 

Before turning to the \IIEKF{} and its properties in this context, let us further analyze the specific structure of the measurements under consideration.

\subsection{Properties of considered noise-free measurements }

The theory of (left) invariant filtering focuses on measurements of the form~\eqref{inv:measurements}, see~\cite{barrau2016invariant}, which boil down to \eqref{eq:nf_meas} when the noise is turned off. It turns out  that the corresponding "observed set"~\eqref{observed_set} possesses an interesting structure. 
\begin{proposition}\label{prop:structure}
    Let $\bfH_{k}$ denote the Jacobian from invariant filtering defined in~\eqref{eq:H_k}. We have the following properties
   \begin{enumerate}
          \item $\bfH_{k}\boxi=\bfzero,~\boxi\in\reals^n \Rightarrow \exp_G\left(\boxi\right)\bfd_k=\bfd_k$,
       \item   $\liealg{s}:=\{ \Lmap{\boxi}\in \liealg{g} \mid \bfH_{k}\boxi = \bfzero\}$ is a Lie subalgebra of $\liealg{g}$.
   \end{enumerate}
\end{proposition}
\begin{proof}
    As we are dealing with matrix Lie groups, the exponential coincides with the matrix exponential $\exp_m$ as follows  
    \begin{equation}\label{eq:expo}
        \exp_G(\boxi)= \exp_m(\Lmap{\boxi}) = \bfI + \sum\limits_{l=1}^{+\infty} \frac{\Lmap{\boxi}^l}{l!}.
    \end{equation}
    Keeping only first-order terms proves, in passing, that the Jacobian $\bfH_k$ from~\eqref{eq:H_k} writes
    \begin{equation}\label{eq:jac_formula}
        \bfH_k\boxi = \Lmap{\boxi}\bfd_k. 
    \end{equation} 
   Thus $\bfH_k\boxi = \bfzero\Rightarrow \Lmap{\boxi}^l\bfd_k=\bfzero$ by induction, and then
   \begin{equation}
       \exp_G(\boxi) \bfd_k = \bfd_k + \sum\limits_{l=1}^{+\infty} \frac{\Lmap{\boxi}^l \bfd_k}{l!} = \bfd_k,
   \end{equation}
   proving the first point. Regarding the second point, let $\bodelta, \bozeta \in \reals^n$ be such that $\Lmap{\bodelta}, \Lmap{\bozeta} \in \liealg{s}$. Considering the standard bilinear skew-symmetric Lie bracket defined as $[\bfA,\bfB]= \bfA\bfB-\bfB\bfA$, we have $[\Lmap{\bodelta}, \Lmap{\bozeta}]\bfd_k = \Lmap{\bodelta}\Lmap{\bozeta}\bfd_k - \Lmap{\bozeta}\Lmap{\bodelta}\bfd_k = \bfzero$ by the definition of $\liealg{s}$ and~\eqref{eq:jac_formula}. This proves that $[\Lmap{\bodelta}, \Lmap{\bozeta}] \in \liealg{s}$, confirming that $\liealg{s}$ is closed under the Lie bracket and is therefore a subalgebra of $\liealg{g}$. 
\end{proof}
The first point of Proposition~\ref{prop:structure} is directly useful for proving the compatibility properties of the \IIEKF{}, in the sense of Definition \ref{def:compatibility}. This will be established in the next subsection, relying on the following key Lemma~\ref{lem:confined_subspace}. We also note, in passing, that the second point will only become relevant when extending the present theory to the iterated \EKF{} on Lie groups (\LGIEKF{}) of~\cite{bourmaud2015continuous}, under certain conditions.

\begin{lemma} \label{lem:confined_subspace}
    Let $\bfH_{k}$ denote the Jacobian from invariant filtering defined in~\eqref{eq:H_k}. We have
    \begin{equation}
        \begin{array}{c}
              \hbochi_{k\mid l} \in\obset{\bochi\bfd_k=\bfy_k}\quad\text{and}\quad\bfH_{k}\boxi=\bfzero,~\boxi\in\reals^n\\
              \Downarrow\\
              \hbochi_{k\mid l}\exp_G\left(\boxi\right)\in \obset{\bochi\bfd_k=\bfy_k}.
        \end{array}
    \end{equation}
    %\rule{0pt}{0pt}
\end{lemma}
\begin{proof}
   Using the first point of  Proposition \ref{prop:structure} yields
    \begin{equation}
       \bfH_k\boxi=\bfzero\Rightarrow \hbochi_{k\mid l} \exp_G(\boxi) \bfd_k = \hbochi_{k\mid l}\bfd_k=\bfy_k,
    \end{equation}
    which proves $\hbochi_{k\mid l} \exp_G(\boxi) \in \obset{\bochi\bfd_k=\bfy_k}$.
\end{proof}
 
 A direct and important consequence is  the following.
 \begin{proposition}
     Assume the two points of Definition~\ref{def:compatibility} are satisfied. Then, the entire probability distribution encoded by~\eqref{eq:concentrated_gaussian} is in fact (almost surely) contained within the observed set $\obset{\bochi\bfd_k=\bfy_k}$, \ie, the estimate is \emph{globally compatible with the noise-free measurement $\bfy_k = \bochi_k\bfd_k$}. 
 \end{proposition}
In other words, if the filter manages to encode the compatibility assumptions correctly locally, it will convey an estimated distribution being wholly consistent with  available information, beyond the first order. This    shall play a key role.

\subsection{Compatibility of the \IIEKF{} with the measurements}\label{propert:subsec}

Let us first focus on the first point of Definition~\ref{def:compatibility}. As the magnitude of the measurement noise tends to zero, the optimization problem~\eqref{eq:IEKF_max_posterior} boils down to finding the smallest~$\boxi$ in the sense of the metric induced by $\bfP_k$, that satisfies the hard constraint $\|\bfz_{k} - \exp_G(\boxi)\bfd_k + \bfd_k\|^2=0$. As $\bfz_k=\hbochi_{k\mid k-1}^{-1}\bfy_k - \bfd_k$, this constraint boils down to $\hbochi_{k\mid k-1}^{-1}\bfy_k=\exp_G\left(\boxi\right)\bfd_k$ and the minimizer $\boxi^\star$ we are seeking necessarily satisfies $\hbochi_{k\mid k-1}\exp_G\left(\boxi^\star\right)\bfd_k= \bfy_k$. Provided that the \IIEKF{} iterate $\boxi_{k \mid k-1}^i$, which is a GN descent, indeed converges to such a minimizer, the updated state $\hbochi_{k\mid k} = \hbochi_{k \mid k-1} \exp_G(\boxi^\star)$  will (by definition) belong  to the observed set, \ie, $\hbochi_{k \mid k} \in \obset{\bochi\bfd_k = \bfy_k}$, ensuring the first point of Definition~\ref{def:compatibility}. As the optimization problem is not convex,  convergence of the GN descent is only guaranteed if the true error $\boxi_{k\mid k-1}$ is sufficiently close to $\bfzero$. However, by contrast, non-iterative update schemes, such as the \IEKF{} update, do not guarantee that $\hbochi_{k\mid k} \in \obset{\bochi\bfd_k = \bfy_k}$ even for small initial errors. In the upcoming simulations of Section~\ref{sec:crane}, and more generally in all simulations we have performed, the \IIEKF{} always converges to the observed set. 

Let us now turn to the second point of Definition~\ref{def:compatibility}.
\begin{theorem} \label{thm:ric}
    The \IIEKF{} updated covariance matrix $\bfP_{k\mid k}$ in the light of  noise-free measurement  $\bfy_k = \bochi_k\bfd_k$  naturally ensures  $\bfH_k\bfP_{k\mid k}\bfH_k^T = \bfzero$.
\end{theorem}
\begin{proof} 
    This is a consequence of linear Kalman theory, which may be proved as follows. When facing noise-free measurements, the Kalman gain is computed as $\bfK_k=\bfL_{k\mid k-1}(\bfH_k\bfL_{k\mid k-1})^\dagger$, where $\bfP_{k\mid k-1} = \bfL_{k\mid k-1}\bfL_{k\mid k-1}^T$. Letting $\bfA_k=\bfH_k\bfL_{k\mid k-1}$, we have: $            \bfH_k\bfP_{k\mid k}\bfH_k^T 
             = \bfH_k(\bfI - \bfK_k\bfH_k)\bfP_{k\mid k-1}\bfH_k^T, 
             = (\bfA_k - \bfA_k(\bfA_k)^\dagger \bfA_k)\bfA_k^T = \bfzero,
     $   
    where we used $\bfA_k\bfA_k^\dagger\bfA_k = \bfA_k$.
    % $\bfH_k\bfP_{k\mid k}\bfH_k^T = \bfH_k(\bfI - \bfK_k\bfH_k)\bfP_{k\mid k-1}\bfH_k^T = \bfH_k(\bfI - \bfL_{k\mid k-1}(\bfH_k\bfL_{k\mid k-1})^\dagger\bfH_k)\bfL_{k\mid k-1}\bfL_{k\mid k-1}^T\bfH_k^T = (\bfH_k\bfL_{k\mid k-1} - \bfH_k\bfL_{k\mid k-1}(\bfH_k\bfL_{k\mid k-1})^\dagger \bfH_k\bfL_{k\mid k-1})\bfL_{k\mid k-1}^T \bfH_k^T = \bfzero$, where we used $\bfP_{k\mid k-1}=\bfL_{k\mid k-1}\bfL_{k\mid k-1}^T$ and $\bfA\bfA^\dagger\bfA = \bfA$.
\end{proof}

\subsection{Consequences of the compatibility property}
When fed with a measurement~\eqref{eq:nf_meas} (resp.\ with a measurement~\eqref{inv:measurements} with small noise), the information that the state lies in (resp.\ is close to)  the observed subset $\obset{\bochi \bfd_k= \bfy_k}$ should be well encoded in the filter, that is, no immediate subsequent measurement should be able to destroy that piece of information.  
The following result shows that the \IIEKF{} indeed can inherently "lock in" perfect information, which is akin to the behavior of the (linear) Kalman filter when confronted with a noise-free linear measurement. 
\begin{theorem}
    \label{thm:update_compatibility}
    Let $\bbfy_k = \bochi_k \bbfd_k$ represent a noise-free piece of information about the true state. Consider an \IIEKF{} whose current distribution is compatible with this information in the sense of Definition~\ref{def:compatibility}, meaning that $\hbochi_{k\mid k-1} \in \obset{\bochi\bbfd_k=\bbfy_k}$ and that $\bbfH_k\bfP_{k\mid k-1}\bbfH_k^T = \bfzero$, where $\bbfH_k$ is the Jacobian associated with $\bbfd_k$. If the estimate is subsequently updated using a (possibly noisy) measurement $\bfy_k=\bochi_k\bfd_k+\bfn_k$, then the corresponding updated estimate $(\hbochi_{k\mid k}, \bfP_{k\mid k})$, which incorporates this new measurement $\bfy_k$, remains compatible with the deterministic information $\bbfy_k=\bochi_k \bbfd_k$. Mathematically,
    \begin{equation}
        \begin{array}{c}
              \hbochi_{k\mid k-1} \in \obset{\bochi\bbfd_k=\bbfy_k} \quad \text{and} \quad \bbfH_k\bfP_{k\mid k-1}\bbfH_k^T = \bfzero, \\
                \Downarrow \\
               \hbochi_{k\mid k} \in \obset{\bochi\bbfd_k=\bbfy_k} \quad \text{and} \quad \bbfH_k\bfP_{k\mid k}\bbfH_k^T = \bfzero.
        \end{array}\label{eq:geometry_respect}
    \end{equation}
    %\rule{0pt}{0pt}
\end{theorem}
\begin{proof}
    Let us first consider the first point of Definition~\ref{def:compatibility}. During the update with $\bfy_k$, we see at line 17 of Algorithm~\ref{alg:IIEKF} that $\hbochi_{k\mid k} = \hbochi_{k\mid k-1}\exp_G(\boxi_{k \mid k-1}^i)$, where $\boxi_{k \mid k-1}^i = \bfK_k^{i}\bfz_k^{i} \in \im \bfK_k^{i} \subseteq \im \bfP_{k\mid k-1}$, with $\im \bfK_k^{i}$ denoting the span of $\bfK_k^{i}$. Since $\bbfH_k\bfP_{k\mid k-1}\bbfH_k^T=\bfzero$ by assumption, we necessarily have $\bbfH_k \boxi_{k \mid k-1}^i=\bfzero$. Then, as $\hbochi_{k\mid k-1} \in \obset{\bochi_k \bbfd_k = \bbfy_k}$, a direct application of Lemma~\ref{lem:confined_subspace} ensures $\hbochi_{k\mid k} \in \obset{\bochi \bbfd_k = \bbfy_k}$.
    
    Now, regarding the second point of Definition~\ref{def:compatibility}, the condition $\bbfH_k\bfP_{k\mid k-1}\bbfH_k^T = \bfzero$ implies $\bfP_{k\mid k-1}\bbfH_k^T = \bfzero$ by symmetry. The update of the \IIEKF{} with measurement $\bfy_k$, see line 19 of Algorithm~\ref{alg:IIEKF}, gives $\bfP_{k\mid k} = (\bfI - \bfK_k\bfH_k)\bfP_{k\mid k-1}$, where $\bfH_k$ is the Jacobian associated with $\bfd_k$. Since $\im \bfK_k \subseteq \im \bfP_{k\mid k-1}$, the updated covariance $\bfP_{k\mid k}$ also verifies $\bfP_{k\mid k}\bbfH_k^T = \bfzero$, whatever $\bfH_k$. Consequently, $\bbfH_k\bfP_{k\mid k}\bbfH_k^T=\bfzero$.
\end{proof}

Once the noise-free information has been properly accounted for, the \IIEKF{} effectively "focuses" on a problem of reduced dimensionality: the subsequent update correctly adjusts the state within   the observed set $\obset{\bochi\bbfd_k=\bbfy_k}$. This significantly improves the filter efficiency in practice, as will be demonstrated in the simulations. By contrast, a classical \EKF{} or even an iterated \EKF{} may step out of the appropriate subset,   resulting in less efficient estimation updates.

\paragraph*{Illustration on $\SO{3}$} 
Albeit a simple example, it is pedagogical to illustrate the results on a problem involving rotation matrices. Let $\bochi_k \in \SO{3}$ be an unknown rotation matrix representing the orientation of a drone in an inertial frame. Suppose an \IIEKF{} is used to estimate $\bochi_k$, and that prior observation has enabled perfect knowledge of the gravity vector in the drone body frame $\bbfd_k$. As a result, the current estimate $(\hbochi_{k\mid k-1}, \bfP_{k\mid k-1})$ is compatible, in the sense of Definition~\ref{def:compatibility}, with the noise-free information $\bfg = \bochi_k \bbfd_k$. Now, suppose a new measurement $\bfy_k = \bochi_k\bfd_k + \bfn_k$ is received from another sensor. When updating the estimate using $\bfy_k$, all iterates of the \IIEKF{}, denoted by $\hbochi_{k\mid k-1}\exp_{\SO{3}}(\boxi_{k \mid k-1}^i)$, remain within the observed set $\obset{\bochi \bbfd_k = \bfg}$. Moreover, the update also ensures  $\bbfH_k \bfP_{k \mid k} \bbfH_k^T = \bfzero$. Consequently, once the noise-free information $\bfg = \bochi_k \bbfd_k$ is incorporated, the \IIEKF{} operates in a subset of reduced dimensionality—namely, it seeks planar rotations within a 2D subspace. This makes the filter significantly more efficient because 1) it performs updates directly within a reduced subspace that contains the true state, and 2) it  wholly  preserves the information from the previous measurements. Note that, if noise is low instead, the latter remains approximately true---as there is a continuum between the noise-free and low-noise cases---which should make the filter very efficient in practice when measurement noise is low. 

\subsection{Extending the theory  to the iterated \EKF{} on Lie groups}
\label{subsec:distinction_bourmaud}

The \LGIEKF{}, introduced in~\cite{bourmaud2016intrinsic}, transposes the standard iterated \EKF{} to Lie group state spaces, by leveraging an intrinsic GN method on the group, that accounts for its manifold structure, notably curvature.   \LGIEKF{} of \cite{bourmaud2016intrinsic} is a general algorithm, meant to address general problems on Lie groups having no specific structure, beyond the intrinsic nature of the state space. By contrast, our \IIEKF{} addresses specific types of observations, namely \eqref{inv:measurements} (or its right-invariant counterpart $\bfy_{k}  = \bochi_k^{-1}\bfd_k + \bfn_{k}$, in which case an iterated \emph{right-invariant} EKF  should be used, by mimicking the difference between the LIEKF and the RIEKF of \cite{barrau2016invariant}). 
It defines a modified innovation     $\bfz_k=\hbochi_k^{-1}\bfy_k - \bfd_k$, allowing for the use  of the invariant Kalman filter framework of \cite{barrau2016invariant,barrau2018invariant}.  The standard invariant EKF has had various successes across control, navigation, and robotics, and is clearly a different algorithm than the \LGEKF{} of \cite{bourmaud2013discrete}.

That said, the questions one could address are as follows. Assume we use the theory of invariant filtering to endow the \LGIEKF{}   with all the features of the \IIEKF{}—namely, measurements of the form~\eqref{eq:nf_meas}, uncertainty model~\eqref{eq:concentrated_gaussian}, and the Kalman gain from Algorithm~\ref{alg:nf_IIEKF} to address  rank deficiency issues, as well as the appropriate choice regarding left or right versions (note that \cite{bourmaud2016intrinsic} chooses a systematic right-invariant error based setting whereas we advocate to choose the error depending on the form of the observations), resulting in an "invariant" version of the \LGIEKF{}, that we call \emph{adapted \LGIEKF{}}, see Algorithm~\ref{alg:LG-IEKF}. Then,  1) do we actually recover the \IIEKF{}, and 2) if no, does it inherit  any of the properties we have proved, given that no such properties have ever been proved for the  \LGIEKF{}? The answer to point 1  is no, as intrinsic GN on the group entails adding various Jacobians    and  differs  from the GN algorithm addressing  \eqref{eq:IEKF_max_posterior}, that directly optimizes  in a linear space. The answer to point 2  is yes,  the properties above do carry over. This is detailed,   proved, and discussed, in Appendix \ref{Appendix:Bourmaud}.

\begin{algorithm}
\caption{The adapted \LGIEKF{} from~\cite{bourmaud2016intrinsic} when measurements   \eqref{eq:nf_meas} are considered, uncertainty model \eqref{eq:concentrated_gaussian} is used,  potential rank deficiency issues in the Kalman gain are addressed via the noise-free gain from Algorithm~\ref{alg:nf_IIEKF}, and a left-invariant error is used instead. See \eqref{eq:right_jac} for definition of $\calJ_r$.}
\begin{algorithmic}
\STATE Same as Algorithm~\ref{alg:nf_IIEKF} with the following exceptions.
\STATE 11: $\mathrlap{\bfH_k^{i}}\hphantom{\bfP_{k\mid k}} \gets \hbochi_{k\mid k-1}\exp_G(\boxi_{k \mid k-1}^i)\bfH_k\calJ_r(\boxi_{k \mid k-1}^i)$
\STATE 14: $\mathrlap{\bfz_k^{i}}\hphantom{\bfP_{k\mid k}} \gets \bfy_k - \hbochi_{k\mid k-1}\exp_G(\boxi_{k \mid k-1}^i) \bfd_k + \bfH_k^i\boxi_{k \mid k-1}^i$
\STATE 19: $\mathrlap{\bfP_{k\mid k}}\hphantom{\bfP_{k\mid k}} \gets \calJ_r(\boxi_{k \mid k-1}^i)(\bfI - \bfK_k^i\bfH_k^i)\bfP_{k\mid k-1} \calJ_r(\boxi_{k \mid k-1}^i)^T$
\end{algorithmic}
\label{alg:LG-IEKF}
\end{algorithm}

\section{Application of engineering interest}\label{sec:crane}
We propose using the \IIEKF{} to estimate the position, velocity, and orientation (extended pose) of the hook of a crane transporting a load, equipped with an IMU. Mounting an IMU on a crane hook for real-time sensor data transmission is technically very feasible nowadays   \cite{rauscher2018motion}, and may open the door to new automation capabilities since it allows for feedback. 
 Leveraging the Lie group $\SE{3}{2}$ introduced by invariant filtering theory in \cite{barrau2016invariant}, we represent the state as
\begin{equation}
    \bochi_k = \begin{bmatrix}
        \bfR_k & \bfv_k & \bfp_k\\
        \bfzero & 1 & 0\\
        \bfzero & 0 & 1
    \end{bmatrix} \in \SE{3}{2},\label{chi:def:eq}
\end{equation}
where $\bfR_k \in \SO{3}$ is the rotation matrix between the IMU and inertial frames, and $\bfv_k,\,\bfp_k \in \reals^3$ are the IMU velocity and position in the inertial frame. The inertial frame is fixed at the crane cable attachment point with the $z$-axis oriented upward, and remains stationary. The hook and IMU frames are assumed perfectly aligned. See our preliminary conference paper~\cite{goffin2023invariant} for more details on this application.  

Neglecting IMU biases, the system dynamics write
\begin{subequations} \label{eq:sys_dyn}
\begin{align}
    \bfR_{k+1} &
    = \bfR_k\exp_{\SO{3}}((\boomega_k + \bfw^{\omega}_k) dt),\label{eq:rot_dyn}\\
    \bfv_{k+1} &= \bfv_k + \left(\bfR_k (\bfa_k + \bfw^{a}_k) + \bfg\right)\,dt,\label{eq:vel_dyn}\\
    \bfp_{k+1} &= \bfp_k + \bfv_k\,dt,\label{eq:pos_dyn}
\end{align}
\end{subequations}
where $\boomega_k$ and $\bfa_k$ denote the IMU angular velocity and linear acceleration, $\bfw^{\omega}_k$ and $\bfw^{a}_k$ represent Gaussian noise in the gyroscope and accelerometer, and $\bfg$ is the gravity vector.  The IMU is attached to a cable whose length $L_k$  is very accurately measured in modern cranes by motor encoders. We hence use it as noise-free measurement. Mathematically, this writes $\bochi_k \bfd_k = \bfzero$, with $\bfd_k = \begin{bmatrix} 0 & 0 & L_k & 0 & 1 \end{bmatrix}^T$. The last two rows of $\bochi_k\bfd_k$ are discarded as they do not contain  information. 

A short simulation is carried out. The hook starts with an initial orientation of $\SI{-45}{\degree}$ around the $y$-axis, zero velocity, and position $\bfp_0 = \begin{bmatrix}\sqrt{2}/2 & 0 & -\sqrt{2}/2\end{bmatrix}^T$. The filter and IMU operate at $\SI{100}{\hertz}$, with a time step $dt$ of $\SI{0.01}{\second}$. The cable length follows $L_k = L_c(k\cdot dt)$, where $L_c(t)$ evolves as $\ddot L_c(t) + 12 \dot L_c(t) + 16 L_c(t) = 64$, with initial conditions $L_c(0)=\SI{1}{\meter}$ and $\dot L_c(0)=\SI{0}{\meter/\second}$.

We compare five filters: \EKF{}, \IterEKF{}, (adapted) \LGIEKF{} (implemented as in Algorithm~\ref{alg:LG-IEKF}), \IEKF{}, and \IIEKF{}. Results are averaged over 500 runs, each with the same ground truth trajectory and a random initial error drawn from~\eqref{eq:concentrated_gaussian}. The estimation is constrained to the $xz$-plane to ensure observability. Gyroscope noise has a standard deviation of $\SI{0.974}{\degree/\second}$ about the $y$-axis and $\SI{0}{\degree/\second}$ about the $x$- and $z$-axes. Accelerometer noise has a standard deviation of $\SI{0.1}{\meter/\second^2}$ along the $x$- and $z$-axes and $\SI{0}{\meter/\second^2}$ along the $y$-axis. The initial error covariance matrix encodes a standard deviation of $\SI{45}{\degree}$ for rotation about the $y$-axis and $\SI{5}{\meter/\second}$ and $\SI{5}{\meter}$ for velocity and position along the $x$- and $z$-axes. No measurement noise is assumed. However, although 
    the noise-free gain formulation used in Algorithm~\ref{alg:nf_IIEKF} is always well-defined and serves as a  basis for the theoretical analysis, it is not well-suited to practical implementations, as it can be difficult to distinguish small singular values from actual zeros in the calculation of the pseudo-inverse. As a result, the noise-free gain formulation from Algorithm~\ref{alg:nf_IIEKF} is replaced with Algorithm \ref{alg:IIEKF}, setting the measurement noise covariance to a tiny value, namely $\bfN_k = 10^{-5} \bfI$. For iterative filters, the GN method stops when $\|\boxi_{k\mid k-1}^i\|_2$ changes by less than $10^{-5}$ between iterations or reaches a maximum of 50 iterations.

Figure~\ref{fig:res} shows the mean estimation error norm over time, averaged across 500 runs, with standard deviation.  Table~\ref{tab:res} summarizes the global Root Mean Square Error (RMSE) computed over the first 15 time steps of the 500 runs. The \IIEKF{} and \LGIEKF{} exhibit similar behavior, consistently outperforming all other filters. The first two updates significantly reduce the error on average, after which subsequent updates continue to decrease the error, albeit at a slower rate. Notably, both filters successfully converge across all 500 runs, regardless of the initial error. The \IIEKF{} performs slightly better on average proving that considering \eqref{eq:IEKF_max_posterior} instead of actual GN on the group does not pose problems.

The \EKF{} and \IterEKF{} exhibit similar performance, while the \IterEKF{} is slightly better. They are outperformed by the introduced algorithm because their updates make them step out of the observed set.   The \IEKF{} surprisingly  fails to converge in most cases. This behavior is expected with low measurement noise: in a noise-free setting, the \IEKF{} update fails to satisfy the first condition in Definition~\ref{def:compatibility}, while still enforcing the second. Consequently, the estimate is confined to a subset that excludes the true state and cannot escape. This phenomenon underscores the necessity of an iterative update in the invariant filtering framework when handling accurate measurements. 

% Table~\ref{tab:res} summarizes the global Root Mean Square Error (RMSE) computed over the first 15 time steps of the 500 runs. The results corroborate the earlier observations: the \IIEKF{} achieves the best performance, closely followed by the \LGIEKF{}. The \EKF{} and \IterEKF{} perform similarly, while the \IEKF{} yields poor results when facing noise-free measurements, particularly in estimating velocity. This issue arises because the measurements directly depend on the hook’s orientation and position, but not on its velocity. As a result, the first update correctly reduces uncertainty in the covariance matrix $\bfP_{1\mid 1}$ but constrains the rotation-position pair within an incorrect subset. By the second update, residual errors persist in orientation and position, and the filter attempts to correct them by disproportionately adjusting velocity instead, leading to a larger velocity estimation error.

As filtering is typically performed in real time, it is crucial to assess the number of iterations required during the update stage for the \IterEKF{}, \LGIEKF{}, and \IIEKF{}. Figure~\ref{fig:histo} presents a histogram of the iteration count for these iterative filters. The results indicate that in over $80\%$ of cases, both the \IIEKF{} and \LGIEKF{} complete the update stage in just two iterations, making their computational cost relatively low. In contrast, the \IterEKF{} generally requires a higher number of iterations on average, which could lead to increased computational expense.

\begin{figure}
    \centering
    \includegraphics[width=\linewidth]{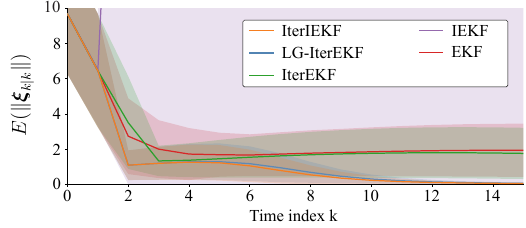}
    \caption{Evolution of the average and standard deviation of the estimation error norm.}
    \label{fig:res}
\end{figure}

\begin{table}
    \centering
    \begin{tabular}{lccccc}
        \toprule
        RMSE &    \EKF{} & \IterEKF{} & \LGIEKF{} & \IEKF{} & \IIEKF{} \\ \midrule
       Orientation & 0.805 &  0.792  &  0.592  &  1.112  &  \textbf{0.574}\\
       Velocity & 1.489 &  1.458  &  1.136  &  6.039  &  \textbf{1.116}\\
       Position & 1.054 &  1.041  &  0.871  &  1.852  &  \textbf{0.858}\\
       \bottomrule
       \rule{0pt}{0pt}
    \end{tabular}
    \caption{RMSE computed over the first 15 times steps of the 500 runs. The smallest values are highlighted in bold.}
    \label{tab:res}
\end{table}

\begin{figure}
    \centering
    \includegraphics[width=\linewidth]{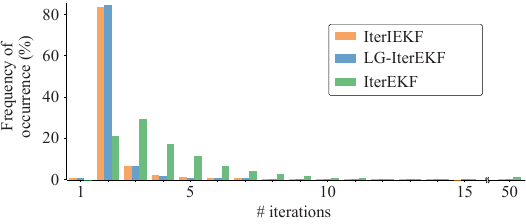}
    \caption{Histogram of the number of iterations required at the update stage of the iterated filters, computed over 500 runs.}
    \label{fig:histo}
\end{figure}

%%%%%%%%%%%%%%%%%%%%%%%%%%%%%%%%%%%%%%%%%%%%%%%%%%%%%%%%%%%%%%%%%%%%%%%%%%%%%%%%
\section{Conclusion}
\label{sec:conclusion}
In this technical note, we introduced the \IIEKF{}, an iterated version of the \IEKF{} inspired by the \IterEKF{}. We analyzed its properties in the limit case of noise-free measurements, expressed in left-invariant form~\eqref{eq:nf_meas} (in case of right-invariant measurements, deriving the corresponding right-\IIEKF{} is straightforward, and the results immediately apply). Specifically, we derived criteria for assessing the local compatibility of Gaussian filters on Lie groups with this class of measurements, and leveraged their specific structure to demonstrate that these criteria also ensure global compatibility. Then, we proved   the \IIEKF{} update produces estimates that are compatible, and moreover tend to inherently "hard code" this compatibility. We also proved that the same properties hold for the \LGIEKF{} when  tailored to the invariant framework. 

We applied the \IIEKF{} to the problem of estimating the extended pose of a crane hook equipped with an IMU, where mechanical information is assimilated in the form of a noise-free pseudo-measurement. The \IIEKF{} demonstrated the best performance, closely followed by the adapted \LGIEKF{}  \cite{bourmaud2016intrinsic}, a fact explained by our present theory. The superiority of both algorithms is consistent with the theoretical guarantees we have proved, and the \IEKF{} is outperformed in this context. 

However, we recommend   using the \IIEKF{} over the \LGIEKF{}, because  the \IEKF{} possesses convergence properties (notably log-linearity of the error \cite{barrau2016invariant}) and consistency properties that have never been proved for the  \LGEKF{}. In particular, the unobservability consistency properties of the \IEKF{} \cite{barrau2018invariant,wu2017invariant,heo2018consistent} may  not be inherited by the \LGIEKF{},   although proving this fact  is left for future research.

Finally, simulations   showed   only a few iterations were needed for the GN method used in the \IIEKF{} update to converge, making the \IIEKF{} viable for real time.

% Regarding its limitations, the \IIEKF{} is designed for systems with invariant measurements. However, the class of measurements satisfying this invariant property is broad and was recently further extended in \cite{barrau2022geometry}.

As a perspective, we aim to apply this iterative estimation technique to other problems where the \IEKF{} has proven effective, particularly in high-precision navigation. Moreover, we will explore its application in state estimation   where constraints could be enforced as noise-free pseudo-measurements.
%%%%%%%%%%%%%%%%%%%%%%%%%%%%%%%%%%%%%%%%%%%%%%%%%%%%%%%%%%%%%%%%%%%%%%%%%%%%%%%%

\appendices
\section{The \IIEKF{} derivation}
\label{app:IIEKF_derivation}

\subsection{Standard case: noisy measurements}\label{app:IIEKF}
Akin to the conventional iterated \EKF{}~\cite{bell1993iterated}, we intend to solve~\eqref{eq:IEKF_max_posterior} iteratively using the Gauss-Newton method.

\begin{lemma}[from~\cite{bell1993iterated}] \label{lem:Kalman_upd}
    Consider the optimization problem
    \begin{equation}
        \bfx^\star = \argmin_{\bfx}\frac{1}{2}\|\bfx - \bbfx\|^2_{\bfA} + \frac{1}{2}\|\bfb - \bfh(\bfx)\|^2_{\bfB},\label{optim:bell}
    \end{equation}
    with $\bfx, \bbfx \in \reals^n$, $\bfb \in \reals^m$, $\bfA \in GL_n(\reals)$, $\bfB \in GL_m(\reals)$ and $\bfh \in C^1(\reals^n, \reals^m)$. The Gauss-Newton method applied to~\eqref{optim:bell} yields the sequence of estimates
    \begin{equation}
        \bfx^{i+1} = \bbfx + \bfK^i(\bfb - \bfh(\bfx^i) - \bfH^i(\bbfx - \bfx^i)),
    \end{equation}
    with
    \begin{subequations}
        \begin{align}
            \bfH^i &= \bfh'(\bfx^i),\\
            \bfK^i &= \bfA(\bfH^{i})^T (\bfH^i \bfA (\bfH^i)^T + \bfB)^{-1},
        \end{align}
    \end{subequations}
    where $\bfh'(\bfx^i)$ denotes the Jacobian of $\bfh$ evaluated at $\bfx^i$.
\end{lemma}
\begin{lemma}\label{lem:jacob}
    We have the following first-order expansion
    \begin{equation*}
       \exp_G(\bar\boxi + \bodelta)\bfd_k = \exp_G(\bar\boxi)\bfd_k+ \exp_G(\bar\boxi)\bfH_k\calJ_r(\bar\boxi)\bodelta+\|\bodelta\|^2,
    \end{equation*}
   with $\bfH_k$ as in~\eqref{eq:jac_formula}, and  where the right Lie Jacobian~\cite{chirikjian2009stochastic} is defined by $\exp_G(\boxi+\bodelta)\approx\exp_G(\boxi)\exp_G(\calJ_r(\boxi)\bodelta)$ neglecting terms of order $\|\bodelta\|^2$, see \eqref{eq:right_jac} for an expression.
\end{lemma}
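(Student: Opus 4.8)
The plan is to chain together the two first-order relations already available in the excerpt: the defining property of the right Lie Jacobian $\calJ_r$, which lets me split the group element $\exp_G(\bar\boxi+\bodelta)$, and the defining relation \eqref{Hk_def} of the output Jacobian $\bfH_k$, which linearizes the action on $\bfd$. No new computation about the group is needed; the whole content is a careful composition of these two expansions.

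First I would use the right Lie Jacobian, evaluated at the base point $\bar\boxi$, to write
\begin{equation}
    \exp_G(\bar\boxi + \bodelta) = \exp_G(\bar\boxi)\,\exp_G\!\big(\calJ_r(\bar\boxi)\bodelta\big) + \calO(\|\bodelta\|^2),
\end{equation}
and then right-multiply by $\bfd$. Next, observing that $\calJ_r(\bar\boxi)\bodelta = \calO(\|\bodelta\|)$ because $\bar\boxi$ is fixed, I would feed this small vector into \eqref{Hk_def} as its argument, giving
\begin{equation}
    \exp_G\!\big(\calJ_r(\bar\boxi)\bodelta\big)\bfd = \bfd + \bfH_k\,\calJ_r(\bar\boxi)\bodelta + \calO(\|\bodelta\|^2).
\end{equation}
Substituting this back and distributing the constant factor $\exp_G(\bar\boxi)$ over the bracket produces the leading term $\exp_G(\bar\boxi)\bfd$ and the first-order term $\exp_G(\bar\boxi)\bfH_k\calJ_r(\bar\boxi)\bodelta$, as claimed.

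The only genuinely delicate point, and the one I would spell out, is the bookkeeping of the remainder under this composition. Because $\bar\boxi$ is held fixed, $\exp_G(\bar\boxi)$ is a constant matrix of bounded norm, so left-multiplication preserves the order of any $\calO(\|\bodelta\|^2)$ term, and $\calJ_r(\bar\boxi)$ is a fixed linear map, so substituting $\calJ_r(\bar\boxi)\bodelta$ into the quadratic remainder of \eqref{Hk_def} still yields something $\calO(\|\bodelta\|^2)$. With these two observations the two linearizations compose cleanly and no uncontrolled cross terms survive, which is all that is required; everything else is a direct substitution.
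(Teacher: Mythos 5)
Your argument is exactly the paper's proof: split $\exp_G(\bar\boxi+\bodelta)$ via the right Lie Jacobian, then apply \eqref{Hk_def} with $\boxi=\calJ_r(\bar\boxi)\bodelta$, and distribute $\exp_G(\bar\boxi)$. The extra remarks on remainder bookkeeping are correct and only make explicit what the paper leaves implicit.
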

\begin{proof}
   We have the first-order expansion in $\bodelta$ 
   \begin{subequations}
       \begin{align}
            \exp_G(\bar\boxi + \bodelta)\bfd_k &\approx \exp_G(\bar\boxi )\exp_G(\calJ_r(\bar\boxi )\bodelta)\bfd_k,\\
            &\approx\exp_G(\bar\boxi )\big(\bfd_k + \Lmap{\calJ_r(\bar\boxi )\bodelta} \bfd_k\big),\\
            &\approx\exp_G(\bar\boxi )\big(\bfd_k + \bfH_k   \calJ_r(\bar\boxi )\bodelta\big),
        \end{align} 
   \end{subequations}
    where we applied~\eqref{eq:jac_formula} with $\boxi= \calJ_r(\bar\boxi )\bodelta.$
\end{proof}

Letting $\bfb:=\bfz_k +\bfd_k $, $\bfh(\boxi):=\exp_G(\boxi)\bfd_k$, and $\bar\bfx= \bfzero$, we see objectives~\eqref{eq:IEKF_max_posterior} and~\eqref{optim:bell} coincide, and a direct application of Lemma~\ref{lem:Kalman_upd} and Lemma~\ref{lem:jacob} yields the following proposition.
 
\begin{proposition} \label{prop:IIEKF_GN_seq}
    The sequence of GN updates for the optimization problem~\eqref{eq:IEKF_max_posterior} writes
    \begin{equation}
        \boxi_{k \mid k-1}^{i+1} = \bfK_k^i\big(\bfz_k - \exp_G(\boxi_{k \mid k-1}^i)\bfd_k + \bfd_k + \bfH_k^{i} \boxi_{k \mid k-1}^i\big),\label{seq:eq}
    \end{equation}
   letting $\bfH_k$ be the \IEKF{} standard Jacobian from~\eqref{eq:H_k} and
        \begin{align} 
         \bfH_k^{i} & = \exp_G(\boxi_{k \mid k-1}^i)\bfH_k \calJ_r(\boxi_{k \mid k-1}^i),\\
            \bfK_k^{i} &= \bfP_{k\mid k-1} (\bfH_k^{i})^T \left(\bfH_k^{i}\bfP_{k \mid k-1}(\bfH_k^{i})^T + \hbfN_k\right)^{-1}.
        \end{align}
\end{proposition}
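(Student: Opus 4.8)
The plan is to derive the recursion \eqref{seq:eq} as a direct specialization of the two lemmas just established, using the dictionary already announced before the statement. Concretely, I would match the MAP problem \eqref{eq:IEKF_max_posterior} to the generic Gauss-Newton problem \eqref{optim:bell} of Lemma~\ref{lem:Kalman_upd} through the correspondence $\bfx \leftrightarrow \boxi$, $\bbfx \leftrightarrow \bfzero$, $\bfA \leftrightarrow \bfP_k$, $\bfB \leftrightarrow \hbfN_k$, $\bfb \leftrightarrow \bfz_k + \bfd_k$, and $\bbfh(\cdot) \leftrightarrow \exp_G(\cdot)\bfd_k$. The first thing to check is that the two objectives coincide under this dictionary: the prior term $\frac12\|\boxi\|^2_{\bfP_k}$ matches $\frac12\|\bfx - \bbfx\|^2_{\bfA}$ at $\bbfx = \bfzero$, while $\frac12\|\bfb - \bbfh(\boxi)\|^2_{\bfB} = \frac12\|\bfz_k + \bfd_k - \exp_G(\boxi)\bfd_k\|^2_{\hbfN_k}$ is exactly the data-fit term of \eqref{eq:IEKF_max_posterior}. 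This step is purely a matter of rewriting and requires no computation.

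The single nontrivial ingredient is the Jacobian of the nonlinear map $\bbfh(\boxi) = \exp_G(\boxi)\bfd_k$, which Lemma~\ref{lem:Kalman_upd} evaluates at each iterate $\boxi_k^i$. Here I would invoke Lemma~\ref{lem:jacob} with $\bar\boxi = \boxi_k^i$ and $\bfd = \bfd_k$: its first-order expansion reads $\exp_G(\boxi_k^i + \bodelta)\bfd_k = \exp_G(\boxi_k^i)\bfd_k + \exp_G(\boxi_k^i)\bfH_k\calJ_r(\boxi_k^i)\bodelta + \calO(\|\bodelta\|^2)$, so that reading off the coefficient of $\bodelta$ gives $\bbfh'(\boxi_k^i) = \exp_G(\boxi_k^i)\bfH_k\calJ_r(\boxi_k^i)$, which is precisely the claimed $\bfH_k^i$. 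Substituting $\bfA = \bfP_k$, $\bfB = \hbfN_k$, and $\bar{\mathbf{H}}^i = \bfH_k^i$ into the gain formula of Lemma~\ref{lem:Kalman_upd} then reproduces $\bfK_k^i = \bfP_k(\bfH_k^i)^T(\bfH_k^i\bfP_k(\bfH_k^i)^T + \hbfN_k)^{-1}$ verbatim.

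It remains to assemble the update. I would start from the generic Gauss-Newton step $\bfx^{i+1} = \bbfx + \bbfK^i(\bfb - \bbfh(\bfx^i) - \bar{\mathbf{H}}^i(\bbfx - \bfx^i))$ of Lemma~\ref{lem:Kalman_upd} and substitute the dictionary. The only place requiring care is the term $-\bar{\mathbf{H}}^i(\bbfx - \bfx^i)$: with $\bbfx = \bfzero$ and $\bfx^i = \boxi_k^i$ it collapses to $-\bfH_k^i(\bfzero - \boxi_k^i) = +\bfH_k^i\boxi_k^i$, while the leading $\bbfx = \bfzero$ term drops out. Collecting the remaining pieces yields $\boxi_k^{i+1} = \bfK_k^i(\bfz_k + \bfd_k - \exp_G(\boxi_k^i)\bfd_k + \bfH_k^i\boxi_k^i)$, which is \eqref{seq:eq}. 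There is no genuine obstacle in this argument: it is a mechanical specialization of two already-proved lemmas, and the only spots where a slip could occur are the sign carried by the $\bbfx - \bfx^i$ contribution and the bookkeeping confirming that $\bbfh'(\boxi_k^i)$ equals the full $\bfH_k^i = \exp_G(\boxi_k^i)\bfH_k\calJ_r(\boxi_k^i)$ rather than $\bfH_k$ alone.
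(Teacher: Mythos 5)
Your proposal is correct and follows exactly the paper's route: the paper likewise obtains the proposition by setting $\bfb=\bfz_k+\bfd_k$, $\bar\bfh(\boxi)=\exp_G(\boxi)\bfd_k$, $\bar\bfx=\bfzero$, identifying \eqref{eq:IEKF_max_posterior} with \eqref{optim:bell}, and directly applying Lemma~\ref{lem:Kalman_upd} together with Lemma~\ref{lem:jacob}. Your write-up simply makes explicit the sign bookkeeping and the Jacobian identification that the paper leaves as a "direct application."
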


Assuming the GN method converges after $i^\star$ iterations, the state estimate is updated according to
\begin{equation}
    \hbochi_{k\mid k} = \hbochi_{k \mid k-1}\exp_G\left(\boxi_{k\mid k-1}^{i^\star}\right).
\end{equation}
Since Jacobian $\bfH_k$ is independent from the current estimate, the Riccati update does not require any iteration, and the error covariance is updated once and for all as follows
\begin{equation} \label{eq:Riccati_iekf}
    \bfP_{k\mid k} = (\bfI - \bfK_k\bfH_k)\bfP_{k \mid k-1},
\end{equation}
with $ \bfH_k=\bfH_k^0 $ and $ \bfK_k=\bfK_k^0$. The full algorithm is summarized in Algorithm~\ref{alg:IIEKF}. Note that we exactly recover the \IEKF{} of~\cite{barrau2016invariant} if we perform one iteration only.

\subsection{Special case: noise-free measurements} \label{app:nf_IIEKF}
In the limit case of noise-free measurements, \ie, $\bfN_k=\bfzero$, the innovation covariance becomes $\bfS_k=\bfH_k\bfP_{k\mid k-1}\bfH_k^T$ and can cease to be invertible, in which case the Kalman gain $\bfK_k = \bfP_{k\mid k-1}\bfH_k^T\bfS_k^{-1}$ is undefined. This occurs when the perfectly observed directions overlap between subsequent updates for example. This issue is solved   using the "noise-free limit gain" developed in our preliminary conference paper~\cite{goffin2023invariant}:
\begin{subequations}\label{eq:nf_gain}
    \begin{align}
        \bfK_k^\mathrm{nf} &= \lim\limits_{\delta \rightarrow 0} \bfP_{k\mid k-1}\bfH_k^T(\bfH_k\bfP_{k \mid k-1} \bfH_k^T + \delta \bfI)^{-1},\\
        &= \bfL_{k\mid k-1} (\bfH_k\bfL_{k\mid k-1})^\dagger,
    \end{align}
\end{subequations} 
where $(\cdot)^\dagger$ is the Moore-Penrose pseudo-inverse, $\bfP_{k\mid k-1} = \bfL_{k\mid k-1}\bfL_{k\mid k-1}^T$, and $\mathrm{nf}$ stands for "noise-free". This gain is always defined.

Let us analyze the behavior of the proposed \IIEKF{} as the measurement noise magnitude approaches zero. Let $\hbfN_k=\delta\bfI$, with $\delta\ll 1$. In this context, Problem~\eqref{eq:IEKF_max_posterior} becomes that of minimizing $f_\delta(\boxi) = \frac{1}{2}\|\boxi\|^2_{\bfP_{k\mid k-1}} + \frac{1}{2}\|\bfz_{k} - \exp_G(\boxi)\bfd_k + \bfd_k\|^2_{\delta \bfI}$.  
The minimizer of $f_\delta$ may be sought using the GN sequence of estimates of Proposition~\ref{prop:IIEKF_GN_seq}
\begin{align}
    \begin{split}
        &\boxi_{k\mid k-1,\delta}^{i+1} =  \bfK_{k,\delta}^{i}\cdot\\
        & \quad \underbrace{\left(\bfz_k -\exp_G(\boxi_{k\mid k-1,\delta}^{i})\bfd_k + \bfd_k + \bfH_{k,\delta}^{i} \boxi_{k\mid k-1,\delta}^{i}\right)}_{\bfz_{k,\delta}^i},
    \end{split}
\end{align}
where $\bfH_{k,\delta}^{i} = \exp_G(\boxi_{k\mid k-1,\delta}^{i})\bfH_k\calJ_r(\boxi_{k\mid k-1,\delta}^{i})$ and where we let $ \bfK_{k,\delta}^{i}=\bfP_{k\mid k-1}(\bfH_{k,\delta}^{i})^T\left(\bfH_{k,\delta}^{i}\bfP_{k\mid k-1} (\bfH_{k,\delta}^{i})^T + \delta\bfI\right)^{-1}$. 
Starting from $\boxi_{k\mid k-1,\delta}^0 = \bfzero$ and letting $\delta \rightarrow 0$, we get
\begin{align}
        \lim\limits_{\delta \rightarrow 0} \boxi_{k\mid k-1,\delta}^{i+1} &= \lim\limits_{\delta \rightarrow 0} \bfK_{k,\delta}^{i} \cdot \lim\limits_{\delta \rightarrow 0} \bfz_{k,\delta}^{i},
\end{align}
so that the GN sequence approximating the solution to the noise-free optimization problem becomes, recalling \eqref{eq:nf_gain},
\begin{equation}\label{eq:nf_GN_seq}
    \boxi_{k\mid k-1}^{i+1} = \bfL_{k\mid k-1}(\bfH_k^i\bfL_{k\mid k-1})^\dagger \bfz_k^i.
\end{equation}
This provides the  limit \IIEKF{} for noise-free measurements  described in Algorithm~\ref{alg:nf_IIEKF}.

\section{Adaptation and extension to the \LGIEKF{}}\label{Appendix:Bourmaud}
When providing the \LGIEKF{} with features from invariant filtering,  the two key differences between the \LGIEKF{} and the \IIEKF{} lie in their Riccati update: 1) the \LGIEKF{} computes $\bfP_{k\mid k}$ using $\bfK_k^i$ and $\bfH_k^i$, the gain and output Jacobian obtained in the final iteration of the GN method, and 2) it involves $\calJ_r(\boxi_{k \mid k-1}^i)$, the right Jacobian of $G$, to  adjust the linearization of the exponential map when expanding around $\boxi_{k\mid k-1}^i$ instead of $\bfzero$. This Jacobian writes:
\begin{equation}\label{eq:right_jac}
    \Lmap{\calJ_r(\bodelta) \bozeta} := \sum\limits_{k=0}^{+\infty}\frac{[\dots[\Lmap{\bozeta}, \overbrace{\Lmap{\bodelta}],\dots, \Lmap{\bodelta}}^k]}{(k+1)!}.
\end{equation}
Despite these differences, the following result holds.

\begin{corollary} \label{cor:LG-IEKF}
    If noise-free measurements $\bfy_k=\bochi_k \bfd_k$ are considered, uncertainty model \eqref{eq:concentrated_gaussian} is used (i.e., a left-invariant error), and potential rank deficiency issues in the Kalman gain are addressed via the noise-free gain from Algorithm~\ref{alg:nf_IIEKF}, then Theorems \ref{thm:ric} and \ref{thm:update_compatibility} hold, replacing \IIEKF{} with \LGIEKF{}.
    \end{corollary}
\begin{proof}
    As $\bfH_k \calJ_r(\boxi_{k\mid k-1}^i)=\exp_G(-\boxi_{k \mid k-1}^i)\hbochi_{k\mid k-1}^{-1}\bfH_k^i$, see line~11 in Algorithm~\ref{alg:LG-IEKF}, we have the following equivalence: $$\bfH_k\bfP_{k\mid k}\bfH_k^T= \bfzero 	\Leftrightarrow \bfH_k^i(\bfI - \bfK_k^i\bfH_k^i)\bfP_{k\mid k-1}(\bfH_k^i)^T = \bfzero.$$ The right-hand side of this equivalence follows directly from the same reasoning used in the proof of Theorem~\ref{thm:ric}. Regarding implication~\eqref{eq:geometry_respect}, the same reasoning as in the proof of Theorem~\ref{thm:update_compatibility} is used to prove $\hbochi_{k\mid k} \in \obset{\bochi \bbfd_k = \bbfy_k}$. Let us show now that the equality $\bbfH_k\bfP_{k\mid k}\bbfH_k^T=\bfzero$ also holds. The assumption $\bbfH_k\bfP_{k\mid k-1}\bbfH_k^T=\bfzero$ implies that $\Lmap{\boxi} \in \liealg{s}$, for all $\boxi \in \im \bfP_{k\mid k-1}$, where $\liealg{s}$ is the subalgebra of $\liealg{g}$ defined in Proposition~\ref{prop:structure}. As the Lie bracket is closed in $\liealg{s}$, the expression in~\eqref{eq:right_jac} is such that $\bbfH_k\calJ_r(\bodelta)\bozeta = \bfzero$, for all $\bodelta,\bozeta \in \im \bfP_{k\mid k-1}$. 
    Recalling that $\bfP_{k\mid k} = \calJ_r(\boxi_{k \mid k-1}^i) (\bfI - \bfK_k^i\bfH_k^i)\bfP_{k\mid k-1} \calJ_r(\boxi_{k \mid k-1}^i)^T$, 
    that $\boxi_{k \mid k-1}^i \in \im \bfP_{k\mid k-1}$, and 
    that $\im (\bfI - \bfK_k^i\bfH_k^i)\bfP_{k\mid k-1} \calJ_r(\boxi_{k \mid k-1}^i)^T \subseteq \im \bfP_{k\mid k-1}$, 
    we necessarily have $\bbfH_k\bfP_{k\mid k} = \bfzero$, and thus $\bbfH_k\bfP_{k\mid k}\bbfH_k^T=\bfzero$.
\end{proof}
We stress that this result--a consequence of the Lie subalgebra structure related to the observation \eqref{eq:nf_meas}--does not, at any rate, extend to the general \LGIEKF{} presented in~\cite{bourmaud2016intrinsic}.

\bibliographystyle{IEEEtran}
\bibliography{biblio.bib}

\end{document}